\documentclass[12pt]{article}
\usepackage[utf8]{inputenc}
\usepackage{amsmath, amssymb, amsthm, mathrsfs, mathtools}
\usepackage{slashed}
\usepackage[colorlinks=true, linkcolor=black, citecolor=black, urlcolor=blue]{hyperref}
\usepackage{tikz}
\usetikzlibrary{cd}
\usepackage{graphicx}
\usetikzlibrary{decorations.markings, decorations.pathreplacing, calc, arrows.meta, positioning}
\usetikzlibrary{shapes.geometric}
\usepackage{titlesec}

\theoremstyle{plain}

\newtheorem{lemma}{Lemma}

\theoremstyle{definition}

\theoremstyle{remark}


\usepackage[left=2.5cm,right=2.5cm,top=2.5cm,bottom=3cm]{geometry}
\linespread{1.25} 
\setlength{\parindent}{0.8cm}
\setlength{\skip\footins}{0.5cm} 
\setlength{\footnotesep}{0.4cm}
\fontdimen2\font=1.2\fontdimen2{\jot}{5pt}

%

%
%
%

\DeclareMathOperator{\Tr}{\mathrm{Tr}}

\newcommand{\vp}{\varphi}

\newcommand{\beq}{\begin{equation}\begin{aligned}}
\newcommand{\eeq}{\end{aligned}\end{equation}}

\newcommand{\al}{\alpha}

\newcommand{\pa}{\partial}

\newcommand{\ov}{\over}
\newcommand{\ep}{\epsilon}

\newcommand{\lam}{\lambda}

\newcommand{\Christoffel}[3]{\left\{ \begin{smallmatrix} #1 \\ #2\, #3 \end{smallmatrix} \right\}}


\titleformat{\section}{\large\bfseries}{\thesection.}{4pt}{}
\titlespacing{\section}{0pt}{20pt}{6pt}

\titleformat{\subsection}{\normalfont\bfseries}{\thesubsection.}{4pt}{}
\titlespacing{\subsection}{0pt}{15pt}{6pt}

\titleformat{\subsubsection}{\normalfont\itshape}{\thesubsubsection.}{4pt}{}
\titlespacing{\subsubsection}{0pt}{15pt}{6pt}

\titleformat{\paragraph}{\normalfont\itshape}{\theparagraph.}{4pt}{}
\titlespacing{\paragraph}{0pt}{15pt}{6pt}

\DeclareFontShape{OT1}{cmr}{mx}{n}%
{<->cmr10}{}
\newcommand{\mytitlefont}{\fontseries{mx}\selectfont}
\DeclareMathAlphabet{\titlemath}{OT1}{cmr}{mx}{n}

\begin{document}

\begin{titlepage}

\begin{center}
			
~\\[0.9cm]
			
{\fontsize{24pt}{0pt} \mytitlefont  Global Anomalies in Sigma Models with Majorana--Weyl Fermions}
			
~\\[1cm]

{\fontsize{15pt}{20pt}\selectfont Changha Choi$^{a}$\,\footnote{\href{mailto:cchoi@perimeterinstitute.ca}{\tt cchoi@perimeterinstitute.ca}}}

~\\[0.5cm]

{\it $^{a}$Perimeter Institute for Theoretical Physics,\\ 
Waterloo, Ontario, N2L 2Y5, Canada}\

\end{center}

\vskip0.5cm
			
\noindent 

\makeatletter
\renewenvironment{abstract}{%
    \if@twocolumn
      \section*{\abstractname}%
    \else 
      \begin{center}%
        {\bfseries \normalsize\abstractname\vspace{\z@}}
      \end{center}%
      \quotation
    \fi}
    {\if@twocolumn\else\endquotation\fi}
\makeatother

\begin{abstract}
We investigate a global sigma model anomaly in two-dimensional sigma models with Majorana--Weyl fermions coupled to a sigma model field with target space~$G$. The anomaly originates from the nontrivial topology of the space of maps and manifests as a phase in the fermion path integral. Using the global anomaly formula expressed in terms of the reduced~$\eta$-invariant, we demonstrate that this anomaly modifies the standard quantization condition of the Wess--Zumino term on~$G$, in close analogy with the three-dimensional parity anomaly. However, our situation is more refined and highlights a qualitatively new phenomenon in two dimensions: whereas in three dimensions the anomalous quantization restricts the level to lie in a half-integer lattice, here it can force the level to take \emph{arbitrary real values}. Furthermore, our results support the consistency of the low-energy description proposed by Gaiotto, Johnson-Freyd, and Witten for three-dimensional $\mathcal{N}=1$ supersymmetric Yang--Mills theory on an interval, by highlighting a subtle and qualitatively distinct nature of the sigma-model anomalies.
\end{abstract}
\vfill
\end{titlepage}

\tableofcontents

\section{Introduction}

Anomalies are among the most fundamental pieces of kinematical data in a quantum field theory. Fermionic anomalies, in particular, are among the earliest and most illustrative examples. They arise when the fermion path integral,
\[
Z_{\text{fermion}}[A] = \int \mathcal{D}\psi \, e^{-S[\psi, A]},
\]
fails to remain invariant under gauge transformations of the background field \( A \). Under a gauge transformation \( A \mapsto A^g \), the fermion partition function may acquire a phase,
\[
Z_{\text{fermion}}[A^g] = e^{i\alpha[g, A]} Z_{\text{fermion}}[A],
\]
which cannot be removed by adding a local counterterm. This signals the presence of a quantum anomaly. If \( A \) is a dynamical gauge field, such anomalies reflect a fundamental inconsistency and must cancel. In contrast, if \( A \) is a background field for a global symmetry, the resulting ’t Hooft anomaly places nontrivial constraints on possible IR phases, dualities, and symmetry realizations.

In this standard framework, anomalies are classified as either local or global, depending on whether the gauge transformation \( g \) is continuously connected to the identity. Local anomalies arise from infinitesimal, perturbative transformations and can be detected through the non-conservation of classically conserved currents or via anomalous variations of correlation functions. In contrast, global anomalies are associated with large gauge transformations—those not connected to the identity—and typically require nonperturbative tools, such as index theorems or topology of the background field space, to detect.

In this paper, we are primarily concerned with global anomalies, which are particularly interesting because they impose subtle consistency constraints that are invisible to local, perturbative analysis. Two of the most well-known examples are Witten's SU(2) anomaly in four dimensions \cite{Witten:1982fp}, which obstructs the consistent quantization of a single Weyl fermion in the fundamental representation of SU(2), and the parity anomaly in three-dimensional non-abelian gauge theories with charged fermions, notably analyzed by Redlich \cite{Redlich:1983dv, Redlich:1983kn}, where gauge invariance is preserved only at the expense of parity. In both cases, the anomaly arises from large gauge transformations—those not connected to the identity component of the gauge group—corresponding to nontrivial elements in \( \pi_4(\mathrm{SU}(2)) = \mathbb{Z}_2 \) and \( \pi_3(G) = \mathbb{Z} \) for compact connected simple Lie group $G$, respectively.

This paper discusses a natural two-dimensional analogue of the $3d$ parity anomaly described above. At first glance, there appears to be no direct analogue of the gauge theory examples in two dimensions, since \( \pi_2(G) = 0 \) for any compact Lie group.\footnote{This statement is schematic, since the global gauge anomaly is not classified by $\pi_d(G)$ but rather by the spin bordism group $\Omega_{d+1}^{\mathrm{Spin}}(BG)$ (see \cite{Davighi:2020kok} and references there in). In two dimensions, however, still no such global gauge anomaly arises because $\Omega_{3}^{\mathrm{Spin}}(BG)=0$~\cite{Delmastro:2021otj}.} Nevertheless, we show that a conceptually related global anomaly can still arise in two dimensions, when Majorana--Weyl fermions are coupled not to gauge fields, but to \( G \)-valued sigma model fields.

The main motivation for this work comes from the paper of Gaiotto, Johnson-Freyd, and Witten \cite{Gaiotto:2019asa} (see also generalization \cite{Dedushenko:2022uay}), where they considered three-dimensional \( \mathcal{N}=1 \) supersymmetric Yang--Mills theory with a Chern--Simons term at level \( k_{\text{3d}} \), placed on the manifold \( \mathbb{R}^2 \times [0, L] \). In the limit \( g^2 L \ll 1 \), and with appropriate supersymmetric Dirichlet boundary conditions, they reasonably argued that the low-energy dynamics is governed by a two-dimensional \( (0,1) \) supersymmetric sigma model on \( \mathbb{R}^2 \) with a level \( k = k_{\text{3d}} \) Wess--Zumino term. The field content consists of a \( G \)-valued bosonic sigma model field \( g: \Sigma^2 \to G \), arising from open Wilson lines ended on an interval, together with Majorana--Weyl fermions valued in the pullback \( g^* TG \), originating from Dirichlet boundary conditions on a three-dimensional Majorana fermion in the adjoint representation.

Even though the paper primarily focuses on the case \( G = \mathrm{SU}(2) \), an apparent puzzle arises for other groups, such as \( G = \mathrm{SU}(3) \). This is because the three-dimensional parity anomaly gives a modified quantization condition for the Chern--Simons level:
\beq
k_{\text{3d}} -\frac{h^\vee}{2} \in \mathbb{Z},
\eeq
where \( h^\vee \) is the dual Coxeter number of \( G \). This leads to the well-known result that the effective Chern--Simons level can be half-integer~\cite{Witten:1999ds}. However, this familiar feature, when combined with the scenario proposed by Gaiotto, Johnson-Freyd, and Witten, leads to a puzzling tension in two dimensions: the resulting two-dimensional \( (0,1) \) sigma model inherits a Wess--Zumino term with the same half-integer level \( k = k_{\text{3d}} \).

A Wess--Zumino term with non-integer level is classically ill-defined, just as a Chern--Simons term at non-integer level fails to be gauge invariant. This raises the question of whether the apparent inconsistency of a non-integer WZ term can be resolved by a quantum anomaly in the underlying theory. In particular, we ask whether the fermionic path integral of a two-dimensional sigma model with Majorana--Weyl fermions carries an anomaly that precisely compensates the ambiguity of the non-integer WZ term. 

To understand anomalies involving sigma models, it is useful to adopt a geometrically motivated perspective \cite{Atiyah:1984tf,Moore:1984ws,Dai:1994kq,Witten:2015aba}: the fermion path integral can be interpreted as a section of a complex line bundle over the space of background fields. In gauge theories, this configuration space is the space of gauge fields modulo gauge transformations, \( \mathcal{A}/\mathcal{G} \), while in sigma models it is the space of maps from spacetime into the target, \( \mathrm{Map}(M^d, G) \). That is,
\beq
Z_{\text{fermion}} \in \Gamma\left( \mathcal{L} \to \mathcal{B} \right),
\eeq
where \( \mathcal{B} \) denotes the background field space (e.g., \( \mathcal{A}/\mathcal{G} \) or \( \mathrm{Map}(M^d, G) \)), and \( \mathcal{L} \) is the determinant (or Pfaffian) line bundle. In this framework, an anomaly corresponds to the nontriviality of \( \mathcal{L} \): local anomalies are encoded in its curvature, computable via the families index theorem, while global anomalies arise from its topology—e.g., nontrivial holonomy around closed loops in \( \mathcal{B} \). This geometric perspective unifies local and global anomalies, and connects naturally with the modern viewpoint of anomalies as invertible field theories in one higher dimension \cite{Freed:2014iua,Monnier:2019ytc}.

From this analysis, we find that the quantization condition in two dimensions is more refined than the standard odd-dimensional parity anomaly. In particular, the global anomaly depends continuously on the background connection on the target manifold, and in particular on its torsion, which can be tuned continuously. As a result, the quantization law is modified so that the coefficient~$k$ of the Wess--Zumino term can, in principle, take an arbitrary real value,
\beq
  k \in \mathbb{R}.
\eeq
In particular, this implies that the anomalous quantization law of the Wess--Zumino term applies not only to Majorana--Weyl fermions but also to Weyl fermions more generally.

The organization of the paper is as follows. In Section~\ref{sec:2}, we explain the setup of the problem. In Section~\ref{sec:EucMW}, we explain the meaning of the Euclidean path integral for Majorana--Weyl fermions. Section~\ref{sec:loc} discusses the local sigma model anomaly. In Section~\ref{sec:top}, we describe the topology of the configuration space and its natural connection to the Wess--Zumino--Witten term. Sections~\ref{sec:hol}, \ref{sec:APS1}, and~\ref{sec:APS2} present the global anomaly formula and compute the reduced $\eta$-invariant by suitably applying the APS index theorem. In Section~\ref{sec:quan}, we establish the anomalous quantization condition for the Wess--Zumino term, distinguishing between the cases where the sigma model field is non-dynamical and dynamical (Sections~\ref{sec:nondyn} and~\ref{sec:(1,0)}, respectively). Finally, in Section~\ref{sec:DF}, we show that our anomalous quantization law is sufficient to ensure that the theory is free of anomalies, in the modern sense of Dai and Freed.

\section*{Acknowledgments}  The author thanks J. Gomis, A. Tseytlin, and E. Witten for a careful reading and valuable comments on the manuscript. The author thanks M. Dedushenko, D. Gaiotto, M. Yamashita, and P. Yi for helpful discussions. The research was supported by the Perimeter Institute for Theoretical Physics. Research at Perimeter Institute is supported in part by the Government of Canada through the Department of Innovation, Science and Economic Development and by the Province of Ontario through the Ministry of Colleges and Universities.

\section{Global Sigma Model Anomalies in Two Dimensions} \label{sec:2}

We consider Majorana--Weyl fermions on a two-dimensional Lorentzian manifold \( \Sigma^2 \), where such spinors are well-defined. Given a Dirac spinor \( \psi \), a Majorana--Weyl fermion is one that satisfies both the Majorana condition \( \psi^T C = \bar\psi \), which imposes a reality constraint using the charge conjugation matrix \( C \), and the Weyl condition \( \gamma_c \psi = \pm \psi \), which projects onto definite chirality. In two dimensions, these constraints are compatible and reduce the spinor to a single real component.

We couple the Majorana--Weyl fermion to a (background) sigma model field \( g: \Sigma^2 \to G \), where the target is a compact Lie group \( G \). The fermion is taken to be valued in the pullback bundle \( g^* TG \), the pullback of the tangent bundle of \( G \) along the map \( g \). In terms of the globally defined (left-invariant) coframe fields \( \theta^a \) on $G$, defined via the Maurer--Cartan 1-form \( g^{-1} dg = \theta^a T_a  \) (see \cite{Choi:2021yuz} for details), the spinor field acquires an additional index corresponding to the adjoint representation of \( G \). That is, we write \( \psi^a \) with \( a = 1, \dots, \dim(G) \), where the index \( a \) labels components in the vielbein basis \( \theta^a \).

To define the Dirac operator, we must choose an affine connection $\nabla$ on the tangent bundle \( TG \). A standard choice is the Levi-Civita connection, which is torsion-free but has nonzero (constant) curvature. Alternatively, one can use a (left- or right-) parallelizing connection, which has nonzero torsion but vanishing curvature. The latter choice is particularly natural in the context of supersymmetric sigma models, including the minimal \((0,1)\) supersymmetric Wess--Zumino--Witten (WZW) model (see~\cite{Murthy:2025ioh} for recent discussions of the \((1,1)\) case).

Throughout the analysis, we take  \( \nabla \)   to be the Cartan connections which is a one-parameter family of left-invariant connection associated to the bi-invariant metric on \( G \). Physically, this arises naturally in the context of $(0,1)$ sigma models of our interest.  The defining property of the Cartan connections is (following \cite{postnikov2013geometry})
\beq\label{eq:CartanC}
\nabla_X Y = \lambda [X, Y],
\eeq
where \( X, Y \in \mathfrak{g} \) are arbitrary left-invariant vector fields on $G$ and $\lam \in \mathbb R$ represents the one-parameter family. 

The corresponding torsion and curvature tensor are given by
\beq
T(X, Y) =  (2\lambda - 1)[X, Y],\quad R(X, Y)Z = (\lam^2-\lam) [[X,Y],Z].
\eeq
There are three important special cases:
\[
\begin{dcases}
\lambda = 0 & \nabla_{{L}}:~ \text{Left Cartan connection}, \\
\lambda = \tfrac{1}{2} & \nabla_{{LC}}:~  \text{Levi--Civita connection}, \\
\lambda = 1 & \nabla_{{R}}:~  \text{Right Cartan connection}.
\end{dcases}
\]
where $\nabla_{{LC}}$ is the standard torsion-free Levi-Civita connection and $\nabla_{{L}}$ (or $\nabla_{{R}}$) is a torsionful flat connection which parallelizes $G$ w.r.t. left (or right) translations on $G$.

If we choose the basis of the (co)frame bundle to be  \( \{\theta^a\} \), the assocciated spin connection 1-form simply becomes
\beq \label{eq:MCw}
\omega^a_{\ b} = -\lambda f^a_{\ bc} \theta^c,
\eeq
where $[T_a,T_b]=f^c_{\ ab}T_c$. For the above special cases, we will denote the associated spin connections to be $w_{LC},w_{L,R}$.

Once the connection on the target $G$ is chosen, the Dirac operator for a fixed sigma model field background \( g \) is explicitly given by
\beq
\slashed{D}_g = \gamma^I u_I^\mu \left( \partial_\mu + \varpi_\mu + \mathcal A _\mu \right),
\eeq
where \( u_I^\mu \) and \( \varpi_\mu \) denote the inverse vielbein and spin connection on \( \Sigma^2 \), respectively, and \( \mathcal  A_\mu \equiv (g^* w)_\mu \) is the pullback of the \( \mathrm{SO}(\dim G) \)-valued connection on \( TG \). In terms of the trivialization \( \theta^a = \theta^a_i\, d\varphi^i \) of \( T^*G \), where \( \{ \varphi^i \} \) are local coordinates on \( G \), the pullback connection acts on the spinor \( \psi^a \) as
\beq
(\mathcal  A_\mu \psi)^a = \partial_\mu \varphi^i \, \theta_i^b \, (\mathcal  A_\mu)^a_{\;\,b} \, \psi^b.
\eeq
and using this one can explicitly write down the fermionic action.

\subsection{Euclidean Path Integral for Majorana-Weyl fermions} \label{sec:EucMW}
Although Majorana–Weyl fermions in two dimensions are strictly defined in Lorentzian signature, we follow the standard approach of analyzing anomalies via the Euclidean path integral, employing an appropriate analytic continuation. In this framework, the Wick-rotated path integral of a Majorana–Weyl fermion is naturally identified with the square root of the Euclidean path integral of a Weyl fermion~\cite{Freed:1986hv}. This reflects the fact that two Majorana–Weyl fermions in Lorentzian signature are equivalent to a single Weyl fermion.\footnote{Weyl fermions exist in even dimensions in both Lorentzian and Euclidean signature, though the Majorana condition depends sensitively on the signature.}

The Euclidean path integral of a Weyl fermion lacks a canonical definition of the determinant, since the chiral Dirac operator $D_+$ maps the space of positive chirality spinors to that of negative chirality spinors, \(\mathcal{H}^+ \rightarrow \mathcal{H}^-\) (or vice versa, depending on convention).
As a result, the path integral associated with the chiral Dirac operator is naturally valued in a complex line bundle, known as the \emph{determinant line bundle}~\cite{quillen1985determinants, bismut1986analysis} which we formally write as $\det D_+$. This complex-valuedness of the path integral reflects the fact that Weyl fermions generically suffer from perturbative anomalies.

We briefly comment that in the case of a Dirac fermion coupled to a real vector bundle \( E \rightarrow \Sigma^2 \) (as is required for consistently coupling to Majorana--Weyl fermions), the Euclidean path integral 
\beq
\det(\slashed{D}) = \det(D_- D_+)
\eeq
has a strictly positive determinant. As a result, the determinant line bundle is trivial. This positivity follows from the existence of a charge conjugation symmetry, which ensures that the spectrum of the Dirac operator is doubly degenerate. In particular, the square root of the full Dirac determinant is unambiguously defined, which reflects the fact that the absolute value of the chiral determinant is given by
\beq
|\det D_+| = \det(\slashed{D})^{1/2}.
\eeq

In summary, the Majorana--Weyl fermion path integral defines a complex line bundle of natural interest: it is the square root of the determinant line bundle of the chiral Dirac operator, commonly referred to as the Pfaffian line bundle~\cite{Freed:1986hv,Witten:1999eg}:
\beq
\mathcal{L}_{\text{MW}} = (\det D_+)^{1/2}.
\eeq
As discussed in the introduction, a sigma model anomaly reflects an obstruction to globally defining a smooth, nonvanishing section of the fermion path integral line bundle \( \mathcal{L}_{\text{MW}} \) over the space of background fields. Local anomalies arise from its curvature, while global anomalies are captured by nontrivial holonomy around loops in field space.

\subsection{Local Sigma Model Anomaly} \label{sec:loc}
Before turning to global anomalies, one must first verify that the local anomaly vanishes. In the case of sigma models \cite{Manohar:1984zj,Moore:1984ws,Alvarez-Gaume:1985bbj,Bagger:1985pw}, this involves studying a family of chiral fermions coupled to background fields that vary smoothly over a parameter space. The fermion partition function then defines a complex line bundle over this space, and any local anomaly is encoded in the curvature of this bundle. The families index theorem \cite{Atiyah:1970ws,Atiyah:1984tf} provides a concrete way to compute this curvature via fiber integration of characteristic classes constructed from the geometry of the background configuration.

Let \( Y \) be a compact, oriented 2-dimensional parameter space (e.g., \( S^2 \)), and let \( X \) be a fixed compact spin manifold of even dimension \( \dim X = 2n \). Consider a smooth family of sigma model fields \( \phi_y: X \to M \), labeled by \( y \in Y \), which we assemble into a single smooth map
\beq
\phi: Y \times X \to M.
\eeq
Let \( E \to M \) be a complex vector bundle with connection, and let \( \mathcal{V} = \phi^* E \to Y \times X \) be the pullback bundle over the total space. This defines a family of chiral Dirac operators \( \slashed{D}_y \) on \( X \), twisted by \( \phi_y^* E \), forming a family of Fredholm operators over \( Y \). The associated determinant line bundle \( \mathcal{L} \to Y \) carries a natural connection, and its curvature is computed via the  families index theorem:
\beq
\mathrm{curv}(\mathcal{L}) = {1\ov 2}\cdot 2\pi i  \left[ \int_X \hat{A}(TX) \wedge \phi^* \operatorname{ch}(E) \right]_2,
\eeq
where \( [\cdot]_2 \) denotes the degree-two component of the resulting form on \( Y \) after integrating over the fiber \( X \) and we have an overall factor of ${1\ov 2}$ since we are dealing with Majorana-Weyl fermion.

Now we specialize to our case where \( X = \Sigma^2 \) is a two-dimensional spacetime, \( M = G \) is a Lie group target, and \( E= TG \to G \) is the tangent bundle equipped with a spin connection $ w$. The sigma model field is a map \( \hat{g}: Y \times \Sigma^2 \to G \), where \( Y \) is a smooth parameter space of sigma model configurations. The pullback bundle \( \mathcal{V} = \hat{g}^* TG \to Y \times \Sigma^2 \) carries the induced connection from \( TG \), and the fermions couple to \( \mathcal{V} \) over each copy of \( \Sigma^2 \). The families index theorem then computes the curvature 2-form of the determinant line bundle \( \mathcal{L} \to Y \), which governs the local (perturbative) anomaly across the family of backgrounds. Since \( \Sigma^2 \) is two-dimensional, the \( \hat{A} \)-genus reduces to 1, and the term in \( \operatorname{ch}(TG) \) that contributes to the curvature is \( \operatorname{ch}_2 \). We therefore have (where $ R=dw +w\wedge w$ is a curvature 2-form on the target)
\begin{equation} \label{eq:curv}
\mathrm{curv}(\mathcal{L}) = \frac{1}{2} \cdot 2\pi i \int_{\Sigma^2} \hat{g}^* \operatorname{ch}_2(TG)=-{i\ov 8\pi} \int_{\Sigma^2} \hat{g}^*\Tr(R\wedge R),
\end{equation}
which gives a 2-form on \( Y \) representing the first Chern class of the determinant line bundle over field space.

In fact, the local anomaly polynomial vanishes for the Cartan connections introduced in \eqref{eq:CartanC}. To see this, the curvature two-form can be obtained using \eqref{eq:MCw} and the Maurer-Cartan equation as
\beq
R^a_{\ b}&=(-\lam+\lam^2) (\theta_{\text{adj}})^a_{\ c} \wedge(\theta_{\text{adj}})^c_{\ b} .
\eeq
where $\theta_{\text{adj}}=\theta^a T^{\text{adj}}_a$ is the Maurer-Cartan one-form in the adjoint representation. The anomaly polynomial then can be shown to vanish using the cyclicity of the trace
\beq
\Tr(R \wedge R)=(-\lam+\lam^2)^2 \Tr(\theta_{\text{adj}}\wedge \theta_{\text{adj}}\wedge \theta_{\text{adj}}\wedge \theta_{\text{adj}})=0,
\eeq
which ensure that our sigma model of interest is free of local anomaly.

\subsection{Topology of the Configurtaion Space} \label{sec:top}

Next, we turn to the discussion of the global anomaly. It is associated with the nontrivial holonomy of the fermion path integral line bundle around closed loops in the configuration space \( \mathrm{Map}(\Sigma^2, G) \). A necessary condition for the presence of a global anomaly is that this configuration space admits noncontractible loops; in other words, the fundamental group \( \pi_1(\mathrm{Map}(\Sigma^2, G)) \) must be nontrivial. In what follows, we demonstrate that this is indeed the case under suitable conditions. (Readers primarily interested in the result may skip the proof.)

\begin{lemma}
Let \( \Sigma^2 \) be a connected two-dimensional manifold and let \( G \) be a connected, simply connected topological space. Then the fundamental group of the mapping space from \( \Sigma^2 \) to \( G \) is given by
\beq
\pi_1(\mathrm{Map}(\Sigma^2, G)) \cong H^3(\Sigma^2 \times S^1; \pi_3(G))=H^3(\Sigma^2 \times S^1; \mathbb Z).
\eeq

\end{lemma}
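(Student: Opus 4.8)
The plan is to compute $\pi_1(\mathrm{Map}(\Sigma^2, G))$ by relating $\pi_1$ of a mapping space to a homotopy class of maps out of a product, and then to express that class cohomologically. The basic tool is the standard adjunction: a loop in $\mathrm{Map}(\Sigma^2, G)$ based at some $g_0$ is a map $S^1 \to \mathrm{Map}(\Sigma^2, G)$, which by the exponential law corresponds to a map $S^1 \times \Sigma^2 \to G$ whose restriction to $\{pt\} \times \Sigma^2$ agrees with $g_0$. Thus $\pi_1(\mathrm{Map}(\Sigma^2, G))$ is identified with a set of homotopy classes $[\Sigma^2 \times S^1, G]$ relative to the basepoint condition on the $\Sigma^2$ factor. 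First I would set this up carefully, noting that since $G$ is connected and simply connected one can ignore basepoint subtleties: the relevant object is the free homotopy classes $[\Sigma^2 \times S^1, G]$, and the group structure on $\pi_1$ matches the group structure coming from concatenation in the $S^1$ direction.

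Next I would invoke obstruction theory, or equivalently the identification of homotopy classes of maps into a highly-connected space with a single cohomology group. Since $G$ is connected and simply connected, its first nonvanishing homotopy group is $\pi_3(G)$ (for $G$ a simple compact Lie group, $\pi_2(G) = 0$ always, and the first nontrivial one is $\pi_3(G) = \mathbb{Z}$). Because $\Sigma^2 \times S^1$ is a $3$-dimensional CW complex and $G$ is $2$-connected, the primary and only obstruction to classifying maps $\Sigma^2 \times S^1 \to G$ lives in degree $3$: by the Hurewicz and Postnikov-tower picture, $[\Sigma^2 \times S^1, G]$ is computed by the first nontrivial Postnikov stage, a $K(\pi_3(G), 3)$, giving
\begin{equation}
[\Sigma^2 \times S^1, G] \cong [\Sigma^2 \times S^1, K(\pi_3(G), 3)] \cong H^3(\Sigma^2 \times S^1; \pi_3(G)).
\end{equation}
I would justify that the higher Postnikov stages do not contribute because their $k$-invariants and higher cohomology obstructions land in $H^{\geq 4}(\Sigma^2 \times S^1; -) = 0$, as the domain has dimension $3$.

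Finally I would substitute $\pi_3(G) = \mathbb{Z}$, valid because $G$ is a connected simply connected compact simple Lie group, to obtain the stated isomorphism with $H^3(\Sigma^2 \times S^1; \mathbb{Z})$. I would also want to check the group structure: the bijection above is a priori one of pointed sets, and I would verify that the $H$-space-type addition on $[\,-, K(\pi_3(G),3)]$ matches the concatenation group law on $\pi_1$, so that the identification is a group isomorphism and not merely a bijection.

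The main obstacle I anticipate is the careful handling of the basepoint and the reduction from the relative/pointed mapping space to the free product $\Sigma^2 \times S^1$. One must ensure that imposing the basepoint condition only along the $\{pt\}\times\Sigma^2$ slice (rather than requiring a fully pointed map) yields exactly $H^3(\Sigma^2\times S^1; \mathbb{Z})$ rather than some reduced or relative cohomology variant; since $\Sigma^2 \times S^1$ has $H^3 \cong H^3$ reduced (because $H^3(\mathrm{pt}) = 0$), this should cause no discrepancy, but it is the step where errors in degree-counting or basepoint conventions are most likely to creep in, and I would treat it with care, possibly via the fibration sequence relating the free and based mapping spaces $\mathrm{Map}_*(\Sigma^2, G) \to \mathrm{Map}(\Sigma^2, G) \to G$ and its long exact sequence on homotopy groups.
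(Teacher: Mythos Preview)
Your proposal is correct and follows essentially the same route as the paper: identify $\pi_1(\mathrm{Map}(\Sigma^2,G))$ with $[\Sigma^2\times S^1, G]$ via the exponential law, then use the 3-stage Postnikov truncation $G \simeq K(\pi_3(G),3)$ (valid because $\pi_0=\pi_1=\pi_2=0$ and the domain is 3-dimensional) together with representability of cohomology. Your treatment is in fact more scrupulous than the paper's, which glosses over the basepoint issue and the compatibility of group structures that you flag; your proposed check via the fibration $\mathrm{Map}_*(\Sigma^2,G)\to\mathrm{Map}(\Sigma^2,G)\to G$ is a sound way to close that gap.
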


\begin{proof}
To compute the fundamental group of the mapping space \( \mathrm{Map}(\Sigma^2, G) \), we observe that loops in this space correspond to smooth one-parameter families of maps \( g_t: \Sigma^2 \to G \) with \( t \in S^1 \). This is equivalent to specifying a smooth map
\beq
\hat{g}: \Sigma^2 \times S^1 \to G.
\eeq
Therefore, the fundamental group is given by the homotopy classes of such maps:
\beq
\pi_1(\mathrm{Map}(\Sigma^2, G)) \cong [\Sigma^2 \times S^1, G].
\eeq

Now assume that \( G \) is connected and simply connected, so \( \pi_0(G) = \pi_1(G) = \pi_2(G) = 0 \). Since \( \Sigma^2 \times S^1 \) is a 3-manifold, only the first three homotopy groups of \( G \) are detected by maps from \( \Sigma^2 \times S^1 \). In particular, up to dimension three, \( G \) is homotopy equivalent to the Eilenberg--MacLane space \( K(\pi_3(G), 3) \). More precisely, this equivalence arises as the 3-stage truncation of the Postnikov tower for \( G \). Applying the representability of cohomology, we find
\beq
[\Sigma^2 \times S^1, G] \cong [\Sigma^2 \times S^1, K(\pi_3(G), 3)] \cong H^3(\Sigma^2 \times S^1; \pi_3(G)),
\eeq
as claimed.
\end{proof}

Remarkably, this identification already suggests a close connection between the global anomaly and the Wess--Zumino (WZ) term \cite{Wess:1971yu,Witten:1983ar} (see also \cite{Lee:2020ojw} for a recent discussion on Wess--Zumino term). The WZ term is defined by integrating the pullback of a canonical closed 3-form \( \omega_3={H\ov 2\pi} \in H^3(G; \mathbb{Z}) \) over a three-manifold \( M^3 \) equipped with a map \( g: M^3 \to G \):
\beq
\Gamma_{\mathrm{WZ}}[g; M^3] = 2\pi \int_{M^3} g^* \omega_3,\quad\text{with }
\omega_3 = \frac{1}{24\pi^2} \operatorname{Tr}\left[(g^{-1}dg)^3\right].
\eeq
Since \( g^* \omega_3 \) represents a class in \( H^3(M^3; \mathbb{Z}) \), the WZ term assigns an integer to each homotopy class of maps from \( M^3 \) into \( G \). In particular, when \( M^3 = \Sigma^2 \times S^1 \), a nonzero WZ term signals that the corresponding map \( g \) defines a nontrivial class in \( H^3(\Sigma^2 \times S^1; \mathbb{Z}) \), and hence corresponds to a noncontractible loop in the configuration space \( \mathrm{Map}(\Sigma^2, G) \). In the analysis that follows, we will concretely demonstrate how the global anomaly is realized as a nontrivial WZ term.

\subsection{The Global Anomaly Formula} \label{sec:hol}

To compute the global anomaly, we will use the general formula for the holonomy of the determinant line bundle, as given by the Witten's global anomaly formula \cite{Witten:1985xe} and their mathematical formulations \cite{Bismut:1986wr,Dai:1994kq}. The global anomaly of a Weyl fermion on a \( d = 2n \) dimensional manifold \( M^d \) with anti-periodic (bounding) spin structure\footnote{The global anomaly formula for the periodic spin structure along the loop differs from the anti-periodic case by a factor of \( (-1)^{\operatorname{ind}(D_+)} \)~\cite{Dai:1994kq}. In this work, we choose the anti-periodic spin structure, as it provides a natural setting for applying the APS index theorem, as will become clear in Section~\ref{sec:APS1}. } along a loop $C$ in the configuration space is expressed as
\beq \label{eq:WBF}
\text{hol}(C) =  \lim_{\ep\rightarrow 0}\exp(-2\pi i\, \bar{\eta}^\ep),
\eeq
where \( \operatorname{ind}(D_+) \) is the standard index of the chiral Dirac operator on \( M^d \), and \( \bar{\eta}^\epsilon \) is the \emph{reduced eta invariant} computed on the mapping torus \( M^d \times S^1 \) defined by the loop \( C \), with respect to a certain self-adjoint Dirac operator \( \slashed{D}_3^\epsilon \) in the adiabatic limit, where the metric is taken as \( g_{M^d} \oplus \epsilon^{-1} g_{S^1} \) (see~\cite{Bismut:1986wr} for more details). The reduced eta invariant \( \bar{\eta} \) associated to a Dirac operator \( D \) is defined by
\begin{equation}
\bar{\eta} = \frac{1}{2}(\eta + h), \quad \text{where} \quad 
\eta := \lim_{s \to 0^+} \sum_{\lambda \neq 0} \operatorname{sign}(\lambda)\, |\lambda|^{-s}, \quad 
h = \dim \ker(D).
\end{equation}

In our case of a Majorana--Weyl fermion, the anomaly is effectively given by the square root of~\eqref{eq:WBF}. As a result, the global anomaly takes the form \cite{Freed:1986hv}
\begin{equation} 
\text{hol}(C)_{\text{MW}} =\lim_{\ep\rightarrow 0}\exp(-\pi i\, \bar{\eta}^\ep).
\end{equation}

\subsection{Applicability of the APS Index Theorem.} \label{sec:APS1}

In general, computing the reduced eta invariant \( \bar{\eta} \) on a \((d+1)\)-dimensional mapping torus is a nontrivial task. However, when the mapping torus bounds a \((d+2)\)-dimensional manifold equipped with compatible geometric structures—such as a spin structure and a principal \(G\)-bundle—its bordism class is trivial in the appropriate spin bordism group, and one can apply the Atiyah–Patodi–Singer (APS) index theorem \cite{Atiyah:1975jf} to compute \( \bar{\eta} \) from data on the bounding manifold. In the case of a standard global anomaly, this is where the difficulty arises: the mapping torus generically fails to bound a spin manifold with compatible gauge bundle, and the APS theorem cannot be used. Remarkably, we will show that in the case of sigma model anomalies, we can bypass this difficulty, and the reduced eta invariant can indeed be computed via the APS index theorem. (We remark that computation of global anomalies using the APS index theorem appeared early in the literature,  notably in the context of global anomaly cancellation in the heterotic string; see~\cite{witten1985global}.)

To appreciate the subtle difference behind global sigma model anomalies, it is helpful to first recall the standard notion of global anomalies associated with a symmetry group \(G\). A global anomaly arises when the gauge transformation under consideration is not connected to the identity component of the gauge group. In such cases, the associated \((d+1)\)-dimensional mapping torus encodes a nontrivial twisting of the gauge bundle, and typically fails to bound any \((d+2)\)-dimensional spin manifold over which both the spin structure and the principal \(G\)-bundle extend. This failure to bound is precisely measured by a nontrivial element in the spin bordism group \( \Omega_{d+1}^{\mathrm{Spin}}(BG) \) \cite{Kapustin:2014tfa,Kapustin:2014dxa,Yonekura:2018ufj}, and signals the presence of a genuine global anomaly.

The distinction between local and global anomalies is directly reflected in whether the mapping torus can be smoothly extended over \( M^d \times D^2 \). In the case of a local anomaly, where the gauge transformation \(u: M^d \to G \) is connected to the identity, the mapping torus \( M^d \times_u S^1 \) is topologically trivial, and both the spin structure (assuming anti-periodicity along \( S^1 \)) and the gauge bundle extend smoothly over \( M^d \times D^2 \). In contrast, a global anomaly arises when \( u \) is homotopically nontrivial. In this case, the associated twist in the gauge bundle obstructs its extension over the disc, and the mapping torus fails to bound a spin manifold with compatible gauge data.

We now turn to the case of sigma model anomalies. Here, the relevant geometric data consists of a spin structure and a map \( g: M^d \times S^1 \to G \), where \(G\) is the target space of the sigma model. For global anomalies in this context, the mapping torus corresponds to a nontrivial element in \( \pi_1(\text{Map}(M^d, G)) \). 

The question of whether the APS index theorem applies naturally reduces to whether the mapping torus represents a trivial (i.e., null-bordant) element in the spin bordism group \( \Omega_{d+1}^{\mathrm{Spin}}(G) \). Again, the simplest possible candidate for a bounding \((d+2)\)-manifold is \( M^d \times D^2 \), but—as in the standard case of gauge-theoretic global anomalies—the mapping torus generally cannot be smoothly extended over \( M^d \times D^2 \) with compatible spin structure and map to \(G\).

Therefore, one might hastily conclude that computing \(\bar{\eta}\) for a global sigma model anomaly is just as difficult as in the standard global anomaly case. However, it is crucial to observe that the Dirac operator depends on the map \( g \) only through the pullback of a fixed connection on \( G \). 

Although the map \( g: M^d \times S^1 \to G \) may be homotopically nontrivial, the connection \( \mathcal A_\mu \) defined above lives in a trivial principal bundle. Explicitly, we have a principal bundle \( g^* F(G) \), where \( F(G) \) denotes the frame bundle of \( G \), and the associated bundle is \( g^* TG \). Crucially, \( F(G) \) is trivial since \( G \) is parallelizable—an important geometric property that ensures \( TG \cong G \times \mathbb{R}^n \) globally. This triviality descends to the pullback: \(\mathcal  A_\mu \) is a connection on the trivial principal bundle \( g^*F(G) \to M \), and is therefore always homotopically trivial.  Therefore the topological nontriviality of \(g\) enters only through the field configuration, but not through the bundle structure that supports the connection.

This leads to a key desired property: more generally, a connection defined on a trivial principal \(G\)-bundle over a manifold \( N \) can always be extended to any bounding manifold \( W \) with boundary \( \partial W= N \), precisely because the triviality of the bundle eliminates any topological obstruction. Concretely, a trivial principal bundle is globally isomorphic to \( N \times G \), so a connection \(\mathcal A \) on it can be described globally as a Lie algebra-valued 1-form \( A \in \Omega^1(N, \mathfrak{g}) \). To extend this to \( W \), one simply defines a smooth 1-form \( \mathcal A' \in \Omega^1(W, \mathfrak{g}) \) that restricts to \( \mathcal A \) on the boundary. Since there are no nontrivial transition functions or patching conditions to satisfy, the extension can always be carried out globally and smoothly. 

Now we are ready to apply this general discussion to our \( d = 2 \) setting, with \( N = \Sigma^2 \times S^1 \) and \( W = \Sigma^2 \times D^2 \), and to compute the eta invariant using the APS index theorem.
\subsection{Application of the APS Index Theorem} \label{sec:APS2}

To compute the reduced eta invariant \( \bar{\eta} \) on \( \Sigma^2 \times S^1 \), we use the Atiyah–Patodi–Singer index theorem by extending the geometric data to the bounding manifold \( \Sigma^2 \times D^2 \). In this section, we take the connection \( \mathcal{A} = g^* w \) to be a pull back of the Cartan connections \eqref{eq:CartanC}, where we have shown the absence of the local anomaly. 

 Since both the spin structure and the pulled-back connection \( \mathcal A \) extend smoothly over \( \Sigma^2 \times D^2 \), the APS index theorem applies and yields
\beq \label{eq:APS}
\bar\eta:= \frac{1}{2} (\eta + h)=  \int_{\Sigma^2 \times D^2} \hat{A}(R) \wedge \text{ch}(\mathcal F)-\text{ind}(\slashed{D}_4)  \,,
\eeq
where \( h = \dim \ker(\slashed{D}_3) \), \( \text{ind}(\slashed{D}_4) \) is the standard Dirac index on $\Sigma^2 \times D^2$ with the APS boundary conditions imposed on the boundary $\Sigma^2\times S^1$, \( \hat{A}(R) \) is the A-roof genus, and \( \text{ch}(\mathcal F) \) is the Chern character of the connection \(\mathcal A\).

We now turn to computing each term in \eqref{eq:APS} separately, beginning with the bulk integral.
On the product manifold \( \Sigma^2 \times D^2 \), the curvature tensor splits cleanly between the two factors due to the block-diagonal metric. As a result, the degree-4 component of \( \hat{A}(R) \wedge \text{ch}(\mathcal F) \) receives no contribution from mixed curvature terms. The APS bulk integral therefore simplifies to
\beq
\int_{\Sigma^2 \times D^2} \hat{A}(R) \wedge \text{ch}(\mathcal F) = \int_{\Sigma^2 \times D^2} \text{ch}_2(\mathcal F).
\eeq

In our setup, the connection \( \mathcal A \) is defined on a trivial principal \( G \)-bundle over the bulk manifold \( \Sigma^2 \times D^2 \). As a consequence, the second Chern character \( \operatorname{ch}_2(\mathcal F) ={1\ov2}\left({i\ov 2\pi}\right)^2\operatorname{Tr}(\mathcal F \wedge \mathcal F) =- \frac{1}{8\pi^2} \operatorname{Tr}(\mathcal F \wedge\mathcal  F) \) is an exact form, and this allows us to apply Stokes' theorem :
\beq \label{eq:Stokes}
\int_{\Sigma^2 \times D^2} \operatorname{ch}_2(\mathcal F) =- \int_{\Sigma^2 \times S^1} \frac{1}{8\pi^2} \text{CS}_3(\mathcal A).
\eeq
where $\text{CS}_3(\mathcal A) = \operatorname{Tr} \left( \mathcal  A \wedge d \mathcal A + \tfrac{2}{3} \mathcal A \wedge\mathcal A \wedge \mathcal A \right)$ is a Chern-Simons 3-form.

The Chern-Simons 3-form for generic Cartan connection using \eqref{eq:MCw} is given by
\beq
\text{CS}_3(\mathcal{A} = g^* w) =  \lambda^2 \left( -1 + \tfrac{2}{3} \lambda \right) \Tr_{\text{adj}}(g^{-1} dg)^3,
\eeq
where \( \operatorname{Tr} \) denotes the trace in the adjoint representation of \( G \).

Therefore the boundary integral in \eqref{eq:Stokes} proportional to the Wess-Zumino-Witten functional that captures the winding number of the map \( g \) from \(\Sigma^2\times S^1\) into \( G \) as
\beq
-\frac{1}{8\pi^2} \int_{\Sigma^2 \times S^1} \text{CS}_3(\mathcal A)
&= \frac{\lam^2(3-2\lam)}{24\pi^2} \int_{\Sigma^2 \times S^1} \Tr_{\text{adj}} \left( g^{-1} d g \right)^3
\\&= 2 \lam^2(3-2\lam) h^\vee \cdot {1\ov 24\pi^2} \int_{\Sigma^2 \times S^1} \Tr\left( g^{-1} d g \right)^3
\\&= 2\lam^2(3-2\lam) h^\vee\Gamma_{\mathrm{WZW}}[g;\Sigma^2\times S^1],
\eeq
where in the last line, \( h^\vee \) is the dual Coxeter number of $G$, appearing when the adjoint trace $\Tr_{\text{adj}}$ is rewritten in terms of the fundamental trace \( \Tr \), which sets the standard WZW normalization \cite{Witten:1983ar}.



Next, we discuss the \( \operatorname{ind}(\slashed{D}_4) \) term. We claim that \( \operatorname{ind}(\slashed{D}_4) \) is always even for any Dirac operator \( \slashed{D}_4 \) on an orientable 4-manifold, possibly twisted by a real vector bundle. This follows from the existence of an anti-linear symmetry \( \mathcal{C}_4 = C_4 \circ K \), where \( C_4=i\sigma_2\otimes\sigma_1 \) is the charge conjugation matrix in 4d and \( K \) denotes complex conjugation. We have
\beq
\mathcal{C} _4i\slashed{D}_4 \mathcal{C}_4^{-1} = i\slashed{D}_4 ,\quad \mathcal{C}_4^2 = -1,
\eeq
which ensures spectrum is doubly degenerate. Moreover, \( [\mathcal{C}_4,\gamma_c]=0 \) and hence the zero modes of definite chirality are also preserved under $\mathcal{C}_4$. Altogether, this guarantees that the difference in the number of positive and negative chirality zero modes, i.e., the index of \( \slashed{D}_4 \), is always an even integer:\footnote{%
We remark that for pseudo-real representations, one can define a modified anti-linear symmetry \( \mathcal{C}' _4= C_4 \otimes J \circ K \), where \( J \) is the intertwiner satisfying \( R(g)^* = J R(g) J^{-1} \). However, in this case \( {\mathcal{C}'_4}^2 = +1 \), so the space of zero modes is real rather than symplectic, and there is no enforced degeneracy. Consequently, the index need not be even.%
\label {ft:pr}}
\beq
\operatorname{ind}(\slashed{D}_4) \in 2\mathbb{Z}.
\eeq

Finally, gathering the preceding discussions, we arrive at the expression for the anomaly phase associated with a Majorana--Weyl fermion which takes the form
\beq \label{eq:globalanomaly}
\exp(-\pi i \bar{\eta}) = \exp\left( -\lam^2(3-2\lam) h^\vee \cdot 2 \pi i\,  \, \Gamma_{\mathrm{WZW}}[g;\Sigma^2\times S^1] \right),
\eeq
which is a generally a complex-valued phase, reflecting the complex nature of the Majorana-Weyl determinant and realizing it geometrically via the topological Wess-Zumino term of level
\begin{equation}
k_{\mathrm{anom}}(\lambda) \;=\; -\lambda^{2}\,\bigl(3-2\lambda\bigr)\,h^\vee,
\end{equation}
which can take a generic real value as $\lambda$ varies. This shows that the anomaly takes values in a generic element of U(1), varying continuously with the choice of connection on the target space.

For the three special values of the Cartan connection parameter $\lambda$ we have
\begin{equation}
k_{\mathrm{anom}}(\lambda) =
\begin{cases}
0, & \lambda = 0, \\[6pt]
-\dfrac{1}{2}\,h^\vee, & \lambda = \dfrac{1}{2}, \\[10pt]
-\,h^\vee, & \lambda = 1.
\end{cases}
\end{equation}
Note that for left and right Cartan connections $\lam=0,1$, the $k_{\mathrm{eff}}$ is integer and hence the global anomaly is absent. This is physically consistent, since in this case the fermion is completely decoupled from the sigma model due to the parallelizability of the target. For $\lambda = 0$, this is evident from the expression~\eqref{eq:MCw}, while for $\lambda = 1$ the spin connection vanishes in the (co)frame given by the right Maurer--Cartan 1-form $\tilde{\theta} = dg\,g^{-1}$.

\section{Anomalous Quantization of the WZ term} \label{sec:quan}

\subsection{Fermions in a Sigma Model Background} \label{sec:nondyn}

Here we treat sigma model background as a non-dynamical background with a chosen Cartan connection. The global anomaly \eqref{eq:globalanomaly} implies that for non-integer $k_{\mathrm{anom}}(\lambda)$, the fermionic path integral defines a nontrivial section of a line bundle over the space of sigma model background. 

We claim that the anomaly can be cancelled if one add a level-$k$ Wess--Zumino term \( k\, \Gamma_{\mathrm{WZ}}[g; M^3] \), where \( \partial M^3 = \Sigma^2 \), provided the following anomalous quantization condition on $k$ is satisfied:
\begin{equation} \label{eq:quan}
\boxed{\,k + k_{\mathrm{anom}}(\lambda) \;\in\; \mathbb{Z}\,,\quad \text{Majorana-Weyl fermions}}
\end{equation}

This can be proved as follows. Given a smooth homotopy \( g_t: \Sigma^2 \to G \) for \( t \in [0,1] \), and a corresponding smooth family of extensions \( \tilde{g}_t: M^3_{(t)}  \to G \), one may construct a 4-manifold \( M^3_{(t)} \times [0,1] \) whose boundary is 
\beq
\partial(M^3_{(t)} \times [0,1]) = \Sigma^2 \times [0,1] \cup M^3_{(1)}  \cup (-M^3_{(0)} ).
\eeq
Since the canonical 3-form \( \omega_3 = {1\ov 24\pi^2}\operatorname{Tr}[(g^{-1}dg)^3] \) on \( G \) is closed, \( d\omega_3 = 0 \), Stokes' theorem applied to this 4-manifold implies that the difference in Wess--Zumino terms is captured entirely by an integral over the interpolating cylinder \( [0,1] \times \Sigma^2 \). Explicitly, one obtains:
\beq \label{eq:WZcontribution}
\Gamma_{\text{WZ}}[\tilde g_1;M^3_{(1)} ] - \Gamma_{\text{WZ}}[\tilde g_0;M^3_{(0)} ] = \Gamma_{\text{WZ}}[g_t;\Sigma^2\times S^1].
\eeq
Therefore WZ term's contribution \eqref{eq:WZcontribution} precisely cancels that global anomaly \eqref{eq:globalanomaly} when obeys the quantization condition \eqref{eq:quan}.

A couple of remarks are in order. First, it is important to emphasize that the global anomaly arises not only for Majorana--Weyl fermion but also for Weyl fermions more generally, since the anomaly phase can take values in an arbitrary $U(1)$. In the case of Weyl fermions, the global anomaly is doubled and the quantization condition is simply modified as
\begin{equation} \label{eq:Wquan}
\boxed{\,k +2 k_{\mathrm{anom}}(\lambda) \;\in\; \mathbb{Z}\,,\quad \text{Weyl fermions}}
\end{equation}

Second, the anomalous quantization condition~\eqref{eq:quan} may be viewed as a two-dimensional analogue of the three-dimensional parity anomaly. However, it is richer: whereas the $3d$ parity anomaly forces the Chern--Simons level to be at worst a half-integer, in our case the level $k$ can take arbitrary real values due to the additional freedom in choosing connections on~$G$. This is a feature with no counterpart in ordinary gauge theory.

\subsection{$(1,0)$ Supersymmetric Sigma Model and UV--anchoring}  \label{sec:(1,0)}

Interestingly, a subtle difference arises when the sigma model field is taken to be dynamical, which distinct to be a unique feature of sigma model anomalies where do direct analog is known in gauge anomalies. Here we consider the model of our interest~\cite{Gaiotto:2019asa}, namely the $(1,0)$ supersymmetric sigma model with target space $G$ and Wess--Zumino coupling $k$, which exhibits a non-trivial RG flow. 

For a generic target space $M$, the $(1,0)$ supersymmetric sigma model on $\mathbb{R}^{1,1}$ can be written as~\cite{Hull:1985jv,Hull:1986xn}
\beq \label{eq:(1,0)action}
S
&={1\ov 4\pi \al'}\int d^2 x\, (g_{ij}(\vp)\eta^{\mu\nu}+b_{ij}(\vp) \epsilon^{\mu\nu} )\pa_\mu \vp^i \pa_\nu\vp^j+ig_{ij}(\vp) \psi^i(\pa_{-} \psi^j+\Gamma^j_{kl} \pa_{-}\vp^k \psi^l)
\eeq
where $\psi^i$ are sections of Majorana-Weyl spinors w.r.t. local coordinates $\{\vp^i\}$. In terms of $B={1\ov 2}b_{ij} d\vp^i \wedge d\vp^j$ and $H=dB={1\ov 3!} H_{ijk}d\vp^i \wedge d\vp^j \wedge d\vp^k$, the connection $\Gamma^j_{kl}$ is written as
\beq
\Gamma^i_{jk}=\Christoffel{i}{j}{k} -S^i_{jk},
\eeq
where $\Christoffel{i}{j}{k}$ are the Levi-Civita connection on $TG$ and $S^i_{jk}={1\ov 2}g^{il}H_{jkl}$ is the torsion part of the affine connection. 

Now we focus on to our group manifold case. Here, we have
\beq
H_{ijk}=k\al' T(F)  \,f_{cde} \theta^c_i \theta^d_j \theta^e_k ,
\eeq
with $T(F)$ is the Dynkin index of the fundamental representation and the expression comes from the identification
\beq
{1\ov 4\pi \al'} \int \vp^* B={1\ov 4\pi \al'} \int_{M^3} \vp^* H =k \int_{M^3} \vp^*{1\ov 12\pi}\Tr(g^{-1}dg)^3.
\eeq
If we translate $\Gamma^i_{jk}$ to the spin-connection in the Maurer-Cartan basis, it is straightforward to see that the connection is nothing but the Cartan connections introduced in \eqref{eq:CartanC}.

Under the overall scaling of the metric $g_{ij}\rightarrow \lam g_{ij}$, we note that the torsionless and torion part of the connection behave as
\beq
\Christoffel{i}{j}{k} \rightarrow \Christoffel{i}{j}{k},\quad S^i_{jk} \rightarrow \lam^{-1} S^i_{jk}.
\eeq

Now it is important to understand the RG properties of the $(1,0)$ action \eqref{eq:(1,0)action}. The sigma model has logarithmic divergences at one-loop and it gives a beta-function (which is essentailly same as bosonic sigma model case) given by \cite{Curtright:1984dz,Sen:1985eb,Sen:1985qt}
\beq
~&\mu {d\ov d\mu}g_{ij}=\alpha' \tilde R_{(ij)} + O(\alpha'^2)=\alpha'\Big(R_{ij}-\tfrac14 H_{ik\ell}H_j{}^{k\ell}\Big)+ O(\alpha'^2),
\\& \mu\frac{d}{d\mu} B_{ij}
= \alpha' \tilde R_{[ij]}+ O(\alpha'^2)=\alpha' \Big(-\tfrac12 \nabla^k H_{kij}\Big) + O(\alpha'^2).
\eeq
where $\tilde R_{ij}$ is a Ricci tensor computed w.r.t. torsionful connection $\Gamma^i_{jk}$. In the case of a group manifold, there is no running of $B_{ij}$ at one loop, since $\nabla^k H_{kij}=0$. Moreover, the two-loop contribution also vanishes in both the bosonic and supersymmetric cases \cite{Braaten:1985is,Fridling:1985hc}. In fact, for the bosonic PCM+WZ model the vanishing of the $\beta$--function of $B_{ij}$ in the presence of covariantly constant torsion $H_{ijk}$ (which indeed holds for $G$) was established to all orders in perturbation theory in \cite{Mukhi:1985vy}. It is then straightforward to extend this non-renormalization theorem to the WZ term of the $(1,0)$ supersymmetric model using superspace techniques. We therefore focus on the single running coupling in the metric sector.

Using the MC basis $\{\theta^a\}$, on a compact simple group $G$ with bi--invariant metric 
$g_{ab}=r^2\delta_{ab}$ and $H_{abc}=h\,f_{abc}$, 
the torsionful connection is
\beq
\Gamma^{a}{}_{bc}
=\tfrac12 f^{a}{}_{bc}-\tfrac12 H^{a}{}_{bc}
=\Big[\tfrac12\Big(1-\frac{h}{r^2}\Big)\Big] f^{a}{}_{bc}.
\eeq
If we parameterize the Cartan connections by 
$\nabla_X Y=\lambda [X,Y]$, then
\beq
\lambda(\mu)=\tfrac12\Big(1-\frac{h}{r^2(\mu)}\Big), 
\qquad h=\alpha' k.
\eeq
Using the one--loop flow,
\beq
\mu\frac{d}{d\mu}r^2
=\frac{\alpha' h^\vee}{2}\Big(1-\frac{(\alpha'k)^2}{r^4}\Big),
\eeq
we can express $\lambda(\mu)$ implicitly as follows.  Writing $y(\mu)=r^2(\mu)$,
\beq
y-\frac{\alpha'k}{2}\ln\!\frac{y+\alpha'k}{y-\alpha'k}
=\Big[y_0-\frac{\alpha'k}{2}\ln\!\frac{y_0+\alpha'k}{y_0-\alpha'k}\Big]
+\frac{\alpha' h^\vee}{2}\ln\!\frac{\mu}{\mu_0},
\eeq
and then
\beq
\lambda(\mu)=\tfrac12\left(1-\frac{\alpha'k}{y(\mu)}\right).
\eeq
In the UV ($r^2\gg \alpha'k$), we find
\beq
r^2(\mu)\;\simeq\;\frac{\alpha' h^\vee}{2}\,\ln\frac{\mu}{\Lambda},
\qquad
\lambda(\mu)\;\simeq\;\tfrac12-\frac{k}{h^\vee\,\ln(\mu/\Lambda)} 
+ O\!\big((\ln\mu)^{-2}\big),
\eeq
which shows an asymptotic freedom. In particular, the connection becomes the Levi-Civita connection at the UV fixed point as $\lam(\infty)={1\ov 2}$.

Here we see a distinct feature of sigma--model anomalies, where the connection undergoes a non--trivial RG flow upon integrating out the sigma--model fields from UV to IR. This is in stark contrast with standard gauge anomalies, where fermions always couple minimally to the gauge field regardless of the energy scale.

However, this leads to an apparent puzzle: the global sigma--model anomaly seems RG--sensitive, since $\lambda(\mu)$ has nontrivial scale dependence and, according to \eqref{eq:globalanomaly}, the global anomaly depends explicitly on $\lambda(\mu)$ and hence on the RG scale $\mu$.

We propose that the subtle resolution of the puzzle is that, once the sigma--model field is dynamical, the anomaly expressed in terms of the determinant of the Dirac operator must be computed only in the deep UV, a prescription we refer to as \emph{UV--anchoring}. To justify this, we adopt the Wilsonian picture of RG flow \cite{Wilson:1971bg,Wilson:1973jj}. One defines the theory with a UV cutoff $\Lambda$ and then lowers the sliding cutoff $\mu<\Lambda$ by integrating out shells of high--momentum modes. In the process, an infinite tower of higher--dimensional irrelevant operators consistent with the symmetries is generated at finite $\mu$. In our case this includes higher--dimensional operators involving fermions. Their presence spoils the interpretation of the fermionic path integral at scale $\mu$ as the determinant of a Dirac operator, rendering the usual global anomaly formula in terms of the $\eta$--invariant inapplicable at finite $\mu$. The correct, RG--invariant definition of the anomaly is therefore obtained by evaluating the fermion determinant in the UV, before such terms are generated.


From this UV--anchoring prescription, the global sigma--model anomaly of the
$(1,0)$ theory is computed using~\eqref{eq:globalanomaly} with the UV value of
the Cartan connection. One finds
\begin{equation}
  k_{\mathrm{anom}}\!\left(\lambda(\infty)\right) \;=\; -\,\frac{h^\vee}{2}\,,
\end{equation}
so that the anomalous quantization condition for the $(1,0)$ sigma model reads
\begin{equation}
  k - \frac{h^\vee}{2} \;\in\; \mathbb{Z}\,.
\end{equation}
This matches precisely the condition obtained independently in
\cite{Gaiotto:2019asa}, where it arose from the `t~Hooft anomaly of the global
$G_L\times G_R$ symmetry.

\section{Absence of the Dai-Freed anomaly} \label{sec:DF}

The anomalous quantization condition discussed in Section~\ref{sec:quan}, which cancels the global anomaly, is not by itself sufficient to ensure a fully well-defined fermionic path integral. Rather, it only guarantees that the phase of the path integral is well defined on a fixed two-dimensional manifold~$\Sigma^2$.

However, to be consistent with locality---in the sense that cutting and pasting of space-time manifolds should be well defined---the phase of the fermionic path integral must be defined consistently on arbitrary two-dimensional manifolds (orientable, in our case). The failure to assign an absolute meaning to the phase of the fermionic path integral is known as the Dai--Freed anomaly~\cite{Dai:1994kq,Witten:2015aba,Witten:2016cio,Yonekura:2016wuc,Garcia-Etxebarria:2018ajm,Witten:2019bou}. 

Following the spirit of~\cite{Witten:2019bou}, we propose that the unambiguous definition of the phase of the Majorana--Weyl fermionic path integral on an arbitrary two-dimensional manifold~$\Sigma^2$, which bounds a three-manifold~$M^3$ where all relevant structures extend smoothly, is
\begin{equation} \label{eq:absphase}
  Z_{\Sigma^2} \;=\; |Z_{\Sigma^2}| \,
   \exp\!\left(- i \pi \, \bar{\eta}_{M^3}\right) \,
   \exp\!\left(2 \pi i\, k \, \Gamma_{\mathrm{WZW}}[g;M^3]\right),
\end{equation}
where $k$ satisfies the anomalous quantization condition
\beq
k + k_{\mathrm{anom}}(\lambda) \;\in\; \mathbb{Z},
\eeq
for the non-dynamical background, and in particular is fixed by the UV value of the connection in the case of a dynamical background. The first phase factor in~\eqref{eq:absphase}, proportional to $\bar\eta$, arises from the Dai--Freed theorem, while the second factor, proportional to the WZW term, cancels the global sigma model anomaly discussed in Section~\ref{sec:2}.

To establish that the phase in~\eqref{eq:absphase} is unambiguous for an arbitrary surface~$\Sigma^2$ admitting a spin structure and a bounding three-manifold~$M^3$, we must show that the definition is independent of the choice of extension to the $(d{+}1)=3$-dimensional manifold~$M^3$. The necessary and sufficient condition is that, for any \emph{closed} spin three-manifold~$\bar M^3$, one has
\beq \label{eq:DFcon}
\exp\!\left(-i\pi \, \bar \eta_{\bar M^3}\right)\,
\exp\!\left(2\pi i\, k \, \Gamma_{\mathrm{WZW}}[g;\bar M^3]\right)
\;\stackrel{?}{=}\; 1 .
\eeq

We now show that the condition~\eqref{eq:DFcon} indeed holds. Following the spirit of Section~\ref{sec:APS1}, we compute the eta-invariant using the APS index theorem. As discussed in Section~\ref{sec:APS1}, one has $\Omega_3^{\mathrm{Spin}}(G) = \mathbb{Z}$, with the obstruction arising from the extension of topologically non-trivial sigma model maps, since $\pi_3(G) = \mathbb{Z}$. However, because the fermions couple to the sigma model maps only through the pullback of the connection on $TG$, it suffices to extend the connection itself through some bounding four-manifold. Using the facts that $\Omega_3^{\mathrm{Spin}}(pt)=0$ and that $TG$ is trivial, we conclude that for any spin three-manifold there always exists a four-manifold on which such a connection can be smoothly extended.

The application of the APS index theorem proceeds in close analogy with Section~\ref{sec:APS2}, and the result is simply a generalized version of~\eqref{eq:globalanomaly}, namely
\beq 
\exp\!\left(-\pi i \, \bar{\eta}_{\bar M^3}\right) 
= \exp\!\left(-2\pi i\, k_{\text{anom}} \, \Gamma_{\mathrm{WZW}}[g;\bar M^3]\right).
\eeq
This relation makes it manifest that the consistency condition~\eqref{eq:DFcon} is satisfied, and hence the theory is free from any Dai--Freed anomaly.

In our case, this is not the end of the story: not every spin surface~$\Sigma^2$ equipped with a sigma–model map $g:\Sigma^2 \to G$ admits a bounding three–manifold to which all structures extend. The obstruction is purely \emph{spin}—namely, the Arf invariant—since $\Omega^{\mathrm{Spin}}_2(pt)=\mathbb{Z}_2$ while $\pi_2(G)=0$. Consequently, if $\operatorname{Arf}(\Sigma^2)=1$, there is no spin three–manifold $M^3$ with $\partial M^3=\Sigma^2$ and extension $\tilde g:M^3\to G$; in particular, the absolute phase prescription~\eqref{eq:absphase} cannot be used to define a canonical phase of the path integral on such~$\Sigma^2$.
 
This does not imply that the phase cannot be consistently defined on arbitrary~$\Sigma^2$; rather, it means that there are multiple consistent assignments of the phase, which can therefore be regarded as a free parameter of the theory~\cite{Witten:2016cio,Witten:2019bou}. In our case, the obstruction is $\mathbb Z_2$-valued, so any disjoint union of two copies of~$\Sigma^2$ bounds a three–manifold to which all structures extend smoothly. Consequently, for $\Sigma^2$ with nontrivial Arf invariant, only the squared partition function
\beq
Z_{\Sigma^2 \sqcup \Sigma^2} \;=\; Z_{\Sigma^2}^2
\eeq
is uniquely defined. This leaves two consistent choices for the phase of $Z_{\Sigma^2}$ itself,
\beq
Z_{\Sigma^2} \;=\; \pm \sqrt{\,Z_{\Sigma^2}^2}\,.
\eeq

\begin{appendix}

\end{appendix}

\bibliographystyle{utphys}
\bibliography{anomalies}

\end{document}